\newcolumntype{P}[1]{>{\centering\arraybackslash}p{#1}}
\def\0{{\mathbf 0}}
\def\1{{\mathbf 1}}
\def\c{{\mathbf c}}
\def\e{{\mathbf e}}
\def\u{{\mathbf u}}
\def\v{{\mathbf v}}
\def\x{{\mathbf x}}
\def\y{{\mathbf y}}
\def\B{{\mathbf B}}
\def\C{{\mathbf C}}
\def\D{{\mathbf D}}
\def\E{{\mathbf E}}
\def\I{{\mathbf I}}
\def\L{{\mathbf L}}
\def\P{{\mathbf P}}
\def\Q{{\mathbf Q}}
\def\S{{\mathbf S}}
\def\T{{\mathbf T}}
\def\U{{\mathbf U}}
\def\V{{\mathbf V}}
\def\W{{\mathbf W}}
\def\Tr{{\text{Tr}}}
\def\ie{{\textit{i.e.}}}
\def\cE{{\mathcal E}}
\def\cG{{\mathcal G}}
\def\cH{{\mathcal H}}
\def\cI{{\mathcal I}}
\def\cK{{\mathcal K}}
\def\cL{{\mathcal L}}
\def\cN{{\mathcal N}}
\def\cV{{\mathcal V}}
\def\cX{{\mathcal X}}
\def\cN{{\mathcal N}}
\def\balpha{{\boldsymbol \alpha}}
\def\bLambda{{\boldsymbol \Lambda}}
\def\bOmega{{\boldsymbol \Omega}}
\newtheorem{theorem}{Theorem}[section]
\newtheorem{lemma}[theorem]{Lemma}
\def\ie{{\textit{i.e.}}}
\title{Learning Sparse Graph Laplacian with $k$ Eigenvector Prior via \\ 
Iterative GLASSO and Projection}
\name{Author(s) Name(s)\thanks{Thanks to XYZ agency for funding.}}
\address{Author Affiliation(s)}
\name{Saghar Bagheri$^{\star}$ \qquad Gene Cheung$^{\star}$ \qquad Antonio Ortega$^{\dagger}$
\qquad Fen Wang$^{\ddagger}$
\thanks{Gene Cheung acknowledges the support of the NSERC grants RGPIN-2019-06271,  RGPAS-2019-00110.}
}
\address{$^{\star}$ York University, Toronto, Canada ~~~~~~~~~~
$^{\ddagger}$ Xidian University, Xi'an, China \\
$^{\dagger}$ University of Southern California, Los Angeles, CA, USA
}
\begin{document}
\ninept
\maketitle
\begin{abstract}
Learning a suitable graph is an important precursor to many graph signal processing (GSP) pipelines, such as graph signal compression and denoising. 
Previous graph learning algorithms either i) make assumptions on graph connectivity (e.g., graph sparsity), or 
ii) make edge weight assumptions such as positive edges only.
In this paper, given an empirical covariance matrix $\bar{\C}$ computed from data as input, we consider an eigen-structural assumption on the graph Laplacian matrix $\L$:
the first $K$ eigenvectors of $\L$ are pre-selected, e.g., based on domain-specific criteria, and the remaining eigenvectors are then learned from data.  
One example use case is image coding, where the first eigenvector is pre-chosen to be constant, regardless of available observed data.
We first prove that the subspace $\cH_{\u}^+$ of symmetric positive semi-definite (PSD) matrices with the first $K$ eigenvectors being $\{\u_k\}$ in a defined Hilbert space is a convex cone.
We then construct an operator to project a given positive definite (PD) matrix $\L$ to $\cH_{\u}^+$, inspired by the Gram-Schmidt procedure.
Finally, we design an efficient hybrid graphical lasso / projection algorithm to compute the most suitable graph Laplacian matrix $\L^* \in \cH_{\u}^+$ given $\bar{\C}$. 
Experimental results show that given the first $K$ eigenvectors as a prior, our algorithm outperforms competing graph learning schemes using a variety of graph comparison metrics.
\end{abstract}
\begin{keywords}
Graph learning, graph signal processing
\end{keywords}
\section{Introduction}
\label{sec:intro}
\textit{Graph signal processing} (GSP) \cite{ortega18ieee} is the study of signals residing on graphs.
While GSP tools have been demonstrated to be effective in a wide range of applications from image compression and denoising to matrix completion \cite{cheung18,bai19,8976421dinesh,hu15,pang15,wang20}, a fundamental first step in GSP is the selection of an appropriate graph $\cG$ (or graph Laplacian matrix $\L$) that suitably describes pairwise (dis)similarities. 
Previous graph learning algorithms fall into two categories: 
i) \textit{connectivity}-based approaches like graphical lasso that, given an empirical covariance matrix $\bar{\C}$, estimate an inverse matrix with only sparse graph connectivity assumptions \cite{articleglasso,cai11_CLIME}; and 
ii) \textit{edge}-based approaches that assume edge properties such as edge signs and absence of self-loops\cite{egilmez17,cheung18tsipn,dong2016learning}.
Neither of these two categories of methods make assumptions about the \textit{eigen-structure} of the graph Laplacian matrix $\L$.

In this paper, we introduce explicit eigen-structure assumptions on the graph Laplacian matrix $\L$---that the first $K$ eigenvectors of $\L$ are pre-selected or computed based on domain-specific criteria---into a graph learning framework. 
Consider the example of image compression, where one can deduce from domain knowledge that the most common pixel block is the constant signal, and thus should be the first eigenvector regardless of training data. 
Consider also political voting records, where the most common pattern is voting along party affiliations in a two-party political system. 
Thus, the first eigenvector should be a piecewise constant signal (\ie, nodes of one party are assigned $1$, while nodes of the other are $-1$) \cite{Banerjee7}. 
There are also practical cases where the first $K$ eigenvectors can be pre-chosen for fast computation.
For example, \textit{fast graph Fourier transform} (FGFT)
\cite{lemagoarou2017analyzing} approximates an eigen-matrix by multiplying a small number of Givens rotation matrices. 
The first $K$ rows of the sparse matrix product are the $K$ approximate eigenvectors, and the $K$ transform coefficients can be computed speedily due to sparsity.

We propose an optimization strategy where a graph Laplacian matrix $\L$ is learned optimally from data, while restricting the first $K$ eigenvectors to be those chosen ahead of time. 
We first prove that the subspace $\cH_{\u}^+$ of symmetric \textit{positive semi-definite} (PSD) matrices with the first $K$
eigenvectors taken from a given set $\{\u_k\}^K_{k=1}$ is a convex cone. 
We next construct an operator to project a given positive definite (PD) matrix $\L$ to $\cH_{\u}^+$, inspired by the Gram-Schmidt procedure \cite{hoffmann1989iterative}. 
Finally, we design an efficient hybrid graphical lasso (GLASSO) / projection algorithm to compute the most suitable graph Laplacian matrix $\L^* \in \cH_{\u}^+$ given input empirical covariance matrix $\bar{\C}$.
Experimental results demonstrate that given the first $K$ eigenvectors as a prior, our algorithm outperformed competing graph learning methods in terms of different graph metrics.

\section{Preliminaries}
\label{sec:prelim}

\vspace{-0.05in}
\subsection{Graph Definitions}

\vspace{-0.05in}
Suppose we are given a graph $\cG(\cV,\cE)$ with $|\cV| = N$ nodes and edges $(i,j) \in \cE$ connecting nodes $i$ and $j$ with weight $w_{ij} \in \mathbb{R}$.
Denote by $\W$ the \textit{adjacency matrix}, where $W_{ij} = w_{ij}$.
$\W$ may include self-loops $W_{ii}$'s.
Assuming that the edges are undirected, $\W$ is symmetric.
Define next the diagonal \textit{degree matrix} $\D$ where $D_{ii} = \sum_{j} W_{ij}$. 
The \textit{combinatorial graph Laplacian matrix} \cite{ortega18ieee} is then defined as $\L = \D - \W$.
To properly account for self-loops, the \textit{generalized graph Laplacian matrix} is defined as $\cL = \D - \W + \text{diag}(\W)$.
In this paper, we assume that the target graph Laplacian to be learned from data may have self-loops, and edge weights can be positive or negative.
This means that our graph Laplacian $\cL$ can account for both pairwise correlation and anti-correlation. 

\vspace{-0.05in}
\subsection{Hilbert Space}

\vspace{-0.05in}
Following terminologies in \cite{vetterli2014foundations}, we first define a \textit{vector space} $\cX$ of real, symmetric matrices in $\mathbb{R}^{N \times N}$. 
Note that $\cX$ is closed under addition and scalar multiplication. 
We next define the standard \textit{inner product} $\langle \cdot, \cdot \rangle$ for matrices $\P, \Q$ in $\cX$ as:
\begin{align}
\langle \P, \Q \rangle &=
\Tr(\Q^{\top} \P) = \sum_{i,j} P_{ij} Q_{ij}.
\end{align}
This definition of inner product induces the Frobenius norm:
\begin{align}
\langle \P, \P \rangle &=
\| \P \|^2_F = \sum_{i,j} P_{ij}^2 \geq 0.
\end{align}
We can now define a \textit{Hilbert space} $\cH$ in $\cX$ as a space of real, symmetric matrices endowed with an inner product $\langle \cdot, \cdot \rangle: (\cH, \cH) \mapsto \mathbb{R}$.
Further, we are interested in the subspace of \textit{positive semi-definite} (PSD) matrices, 
\ie, $\cH^+ = \{\P \in \cH \,|\, \P \succeq 0\}$. 
One can easily show that subspace $\cH^+$ is a convex cone \cite{convex-optimization}.

\section{Projection Operator}
\label{sec:algo}

We first overview our algorithm development.
We prove that the subspace $\cH_{\u}^+ \subset \cH^+$ of symmetric PSD matrices 
that share a common \textit{ordered} set of first $K$ eigenvectors is a convex cone. 
This means that, given an empirical covariance matrix $\bar{\C}$ estimated from $M$ signal observations $\cX = \{\x_1, \ldots, \x_M\}$, minimizing a convex objective computed using $\bar{\C}$ and a graph Laplacian matrix $\L$ while restricting $\L \in \cH_{\u}^+$ is a convex optimization problem.
We then develop a projection operator to project a \textit{positive definite} (PD) matrix to $\cH_{\u}^+$. 
We describe our optimization algorithm using the projection operator in Section\;\ref{sec:algo2}.
\vspace{-0.1in}
\subsection{Subspace of Matrices with Common First $K$ Eigenvector}

Denote by $\cH_{\u}^+ \subset \cH^+$ the subspace of PSD matrices that share the first $K$ eigenvectors $\{\u_k\}_{k=1}^K$, assumed to be  orthonormal, \ie, 
\begin{align}
\u_j^{\top} \u_k = \delta_{j-k}, ~~~\forall j, k \in \cI_K
\end{align}
where $\cI_K = \{1, \ldots, K\}$ and $\delta_i$ is the discrete impulse function that evaluates to 1 if $i=0$ and 0 otherwise.
We can define $\cH_{\u}^+$ using the Rayleigh quotient and the min-max theorem \cite{golub12} as
\begin{align}
\cH_{\u}^+ = \left\{ \L \in \cH^+ \;|\;
\u_k = \arg \min_{\x \,|\, \x \perp \u_j, \forall j < k} \frac{\x^{\top} \L \x}{\x^{\top} \x}, ~~
k \in \cI_K
\right\}.
\end{align}
\vspace{-0.2in}
\begin{lemma}
$\cH_{\u}^+$ is a convex cone.
\end{lemma}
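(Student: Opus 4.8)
The plan is to prove directly that $\cH_{\u}^+$ is closed under arbitrary nonnegative linear combinations, which is precisely the definition of a convex cone. Write $R_\L(\x) = \x^{\top}\L\x / (\x^{\top}\x)$ for the Rayleigh quotient, fix $\L_1, \L_2 \in \cH_{\u}^+$ and scalars $\alpha, \beta \geq 0$, and set $\L = \alpha \L_1 + \beta \L_2$. Two observations drive the argument. First, $\cH^+$ is itself a convex cone (observed in Section~\ref{sec:prelim}), so $\L \succeq 0$ and hence $\L \in \cH^+$. Second, the Rayleigh quotient is linear in the matrix argument:
\begin{align}
R_{\L}(\x) = \alpha \, R_{\L_1}(\x) + \beta \, R_{\L_2}(\x), \qquad \forall\, \x \neq \0 .
\end{align}

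Next I would verify, for each $k \in \cI_K$, the eigenvector condition in the definition of $\cH_{\u}^+$. The feasible set $\cS_k = \{\x \neq \0 \mid \x \perp \u_j,\ \forall\, j < k\}$ depends only on $\u_1, \dots, \u_{k-1}$, so it is common to $\L_1$, $\L_2$, and $\L$, and $\u_k \in \cS_k$ by orthonormality. Since $\L_1, \L_2 \in \cH_{\u}^+$, the vector $\u_k$ minimizes $R_{\L_1}$ over $\cS_k$ and also minimizes $R_{\L_2}$ over $\cS_k$. With $\alpha, \beta \geq 0$, the linearity above yields, for every $\x \in \cS_k$,
\begin{align}
R_{\L}(\x) = \alpha\, R_{\L_1}(\x) + \beta\, R_{\L_2}(\x) \;\geq\; \alpha\, R_{\L_1}(\u_k) + \beta\, R_{\L_2}(\u_k) = R_{\L}(\u_k) ,
\end{align}
so $\u_k$ minimizes $R_{\L}$ over $\cS_k$ as well. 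As this holds for every $k \in \cI_K$, we get $\L \in \cH_{\u}^+$, proving the lemma.

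The step I expect to demand the most care is the ``$\arg\min$'' in the definition of $\cH_{\u}^+$: it implicitly asserts that $\u_k$ is the \emph{unique} minimizer of $R_{\L}$ over $\cS_k$ (up to nonzero scalar multiples), and this uniqueness is what pins down an ordered list of leading eigenvectors rather than merely the leading eigenspaces. To see it is inherited, suppose $\alpha > 0$ and that some $\v \in \cS_k$ not collinear with $\u_k$ also attained the minimum of $R_{\L}$ over $\cS_k$; then the displayed chain would force $R_{\L_1}(\v) = R_{\L_1}(\u_k)$, contradicting uniqueness for $\L_1 \in \cH_{\u}^+$ (and symmetrically if $\beta > 0$). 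The only remaining case, $\alpha = \beta = 0$, gives the zero matrix, which lies in $\cH_{\u}^+$ under the convention that $\arg\min$ denotes the minimizing set---to which every $\u_k$ trivially belongs---in keeping with the requirement that a convex cone contain the origin. A longer but more transparent alternative is to eigen-decompose $\L_1$ and $\L_2$, which share the eigenvectors $\u_1, \dots, \u_K$ and the invariant subspace $\mathrm{span}\{\u_1, \dots, \u_K\}^{\perp}$, and invoke Weyl's inequality to check that the strict ordering $\lambda_1 < \cdots < \lambda_K < \lambda_{K+1}$ of leading eigenvalues survives addition; the Rayleigh-quotient argument above, however, is shorter and stays within the machinery already developed.
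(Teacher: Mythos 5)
Your proof is correct, and it takes a somewhat different route from the paper's. The paper argues by induction on $K$: the base case and the inductive step each establish the next shared eigenvector via the superadditivity of the constrained minimum, $\min_{\x} \x^{\top}(c_1\L_1+c_2\L_2)\x/\x^{\top}\x \geq c_1\min_{\x}(\cdot) + c_2\min_{\y}(\cdot)$, justified informally by the right-hand side having ``more degrees of freedom.'' You instead dispense with induction entirely, observing that in the definition of $\cH_{\u}^+$ the constraint set $\cS_k=\{\x\neq\0 \mid \x\perp\u_j,\ j<k\}$ is built from the \emph{fixed} prior vectors $\u_1,\dots,\u_{k-1}$ and is therefore common to $\L_1$, $\L_2$, and any nonnegative combination; the pointwise identity $R_{\alpha\L_1+\beta\L_2}(\x)=\alpha R_{\L_1}(\x)+\beta R_{\L_2}(\x)$ then gives $R_{\L}(\x)\geq R_{\L}(\u_k)$ for every $\x\in\cS_k$ and every $k$ simultaneously. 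This buys a shorter and arguably cleaner argument: the per-vector inequality is immediate (no appeal to ``degrees of freedom''), and the $K$ conditions are handled in one stroke rather than sequentially. You also go beyond the paper in two respects the paper leaves implicit: you address whether the $\arg\min$ in the definition is meant as a unique minimizer and show uniqueness is inherited via the equality case of your chain of inequalities, and you handle the degenerate case $\alpha=\beta=0$ (the zero matrix), which the paper's proof silently covers by the same convention. Both proofs rest on the same underlying tool---the Rayleigh-quotient/min--max characterization---so the difference is structural rather than conceptual, but your version is the tighter write-up.
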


\begin{proof}
We prove by induction. 
For the base case when $K=1$, let $\L_1, \L_2 \in \cH^+$ be two matrices that share the first eigenvector $\u_1$. 
Consider a positive linear combination $\cL = c_1 \L_1 + c_2 \L_2$, where $c_1, c_2 \geq 0$.
We examine the smallest eigenvalue $\lambda_{\min}(\cL)$ of $\cL$ using the Rayleigh quotient again, \ie, 
\begin{align}
\lambda_{\min}(\cL) &= \min_{\x} \frac{\x^{\top} \cL \x}{\x^{\top} \x} 
= \min_{\x} \frac{\x^{\top} \left(c_1 \L_1 + c_2 \L_2 \right) \x}{\x^{\top} \x}
\nonumber \\
&\stackrel{(a)}{\geq} c_1 \left( \min_{\x} \frac{\x^{\top} \L_1 \x}{\x^{\top} \x} \right) +
c_2 \left( \min_{\y} \frac{\y^{\top} \L_2 \y}{\y^{\top} \y} \right)
\nonumber \\
&\stackrel{(b)}{=} c_1 \, \lambda_{\min}(\L_1) + 
c_2 \, \lambda_{\min}(\L_2) \nonumber
\end{align}
The inequality in $(a)$ is true since right-hand side (RHS) has more degrees of freedom than left-hand side (LHS) to minimize the two non-negative terms ($\L_1$ and $\L_2$ are PSD matrices in $\cH^+$).
$(b)$ is true by definition of Rayleigh quotient, with $\u_1$ being the minimizing argument for both terms by assumption. 
Thus, the argument that minimizes the Rayleigh quotient for $\cL$ is also $\u_1$, and therefore $\u_1$ is the first eigenvector of $\cL$. 
Since this analysis is true for all $c_1, c_2 \geq 0$, $\cH_{\u}^+$ is a convex cone for $K=1$. 

Consider next the inductive step, where we assume the $K=K_o$ case is true, and we prove the $K_o+1$ case.
Let $\L_1, \L_2 \in \cH^+$ be two matrices that share the first $K_o+1$ eigenvectors $\{\u_k\}_{k=1}^{K_o+1}$.
Consider a positive linear combination $\cL = c_1 \L_1 + c_2 \L_2$, where $c_1, c_2 \geq 0$.
By the inductive assumption, $\cL$ also shares the first $K_o$ eigenvectors $\{\u_k\}_{k=1}^{K_o}$.
For the $(K_o+1)$-th eigenvector, consider the $(K_o+1)$-th eigenvalue $\lambda_{K_o+1}(\cL)$, computed as
\begin{small}
\begin{align*}
&= \min_{\x|\x \perp \{\u_k\}^K_{k=1}} \frac{\x^{\top} \cL \x}{\x^{\top} \x} 
= \min_{\x|\x \perp \{\u_k\}^K_{k=1}} \frac{\x^{\top} \left( c_1 \L_1 + c_2 \L_2 \right) \x}{\x^{\top} \x} \\
&\geq c_1 \left( \min_{\x|\x \perp \{\u_k\}^K_{k=1}} \frac{\x^{\top} \L_1 \x}{\x^{\top} \x} \right) + 
c_2 \, \left( \min_{\y|\y \perp \{\u_k\}^K_{k=1}} \frac{\y^{\top} \L_2 \y}{\y^{\top} \y} \right) \\
&= c_1 \, \lambda_{K_o+1}(\L_1) + c_2 \lambda_{K_o+1}(\L_2).
\end{align*} 
\end{small}\noindent
Since the argument that minimizes the Rayleigh quotient for both $\L_1$ and $\L_2$ is $\u_{K_o+1}$, this is also the argument that minimizes the Rayleight quotient for $\cL$.
Thus the ($K_o+1$)-th eigenvector for $\cL$ is also $\u_{K_o+1}$. 
Thus, we conclude that $\cL$ has $\{\u_k\}_{k=1}^{K_o+1}$ as its first $K_o + 1$ eigenvectors.
Since both the base case and the inductive step are true, $\cH^+_{\u}$ is a convex cone.
\end{proof}

\vspace{-0.2in}
\subsection{Projection to Convex Cone $\cH_{\u}^+$}

To project a PD matrix $\P \in \cH^+$ to convex cone $\cH_{\u}^+$ given first $K$ eigenvectors $\{\u_k\}^K_{k=1}$, we approximate the inverse $\C = \P^{-1}$ with $\hat{\C}$ and ensure $\u_1$ is the argument that \textit{maximizes} the Rayleigh quotient of $\hat{\C}$, then $\u_2$ is the argument that maximizes the Rayleigh quotient of $\hat{\C}$ while being orthogonal to $\u_1$, and so on.
$\hat{\C}$ will be a linear combination of $\{\U_k\}^K_{k=1}$ and $\{\V_k\}^N_{k=K+1}$, where $\U_k = \u_k \u_k^{\top}$, $\V_k = \v_k \v_k^{\top}$, and $\v_k$'s are the $N-K$ remaining eigenvectors of $\hat{\C}$.

Specifically, we first compute 
$\C = \P^{-1}$. 
Define $\bOmega = \{\alpha \U_1, \alpha \in \mathbb{R}\}$ as the subspace spanned by rank-1 matrix $\U_1 = \u_1 \u_1^{\top}$ with scalar $\alpha$.
Denote by $\bOmega^{\perp}$ the orthogonal subspace in $\cH$ so that $\cH = \bOmega \oplus \bOmega^{\perp}$.
We first write $\C$ as its projection to $\bOmega$ plus its orthogonal component, \ie, 
\begin{align}
\C = \langle \C, \U_1 \rangle \U_1 
~+~ 
\C_{\bOmega^{\perp}} 
\label{eq:decompose}
\end{align}
where $\C_{\bOmega^{\perp}}$ is the orthogonal component of $\C$ in subspace $\bOmega^{\perp}$.
We show that given $\C$ is PD, inner product $\langle \C, \U_1 \rangle$ is non-negative.

\begin{lemma}
$\langle \C, \U_1 \rangle \geq 0$.
\end{lemma}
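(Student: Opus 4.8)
The plan is to reduce the matrix inner product $\langle \C, \U_1 \rangle$ to a scalar quadratic form and then invoke positive (semi-)definiteness of $\C$. First I would use the definition of the inner product on $\cH$ together with the fact that $\U_1 = \u_1 \u_1^{\top}$ is a rank-one matrix:
\begin{align}
\langle \C, \U_1 \rangle = \Tr(\U_1^{\top} \C) = \Tr(\u_1 \u_1^{\top} \C) = \Tr(\u_1^{\top} \C \u_1) = \u_1^{\top} \C \u_1,
\end{align}
where the third equality uses the cyclic property of the trace and the last uses that $\u_1^{\top}\C\u_1$ is a $1\times 1$ matrix (a scalar).

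Next I would simply observe that $\u_1$ is a unit vector (by the orthonormality assumption $\u_j^\top \u_k = \delta_{j-k}$), hence nonzero, and that $\C = \P^{-1}$ is PD because $\P \in \cH^+$ is assumed PD. By the definition of positive definiteness, $\x^{\top}\C\x > 0$ for every nonzero $\x$, so in particular $\u_1^{\top}\C\u_1 > 0 \ge 0$, which is the claim. (If one only wants PSD-ness of $\C$, the same argument gives $\ge 0$ directly; the strict inequality is a bonus that will not hurt the later construction.)

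There is no real obstacle here — the statement is essentially a one-line consequence of unrolling the Hilbert-space inner product into a Rayleigh-type quadratic form and applying the hypothesis that $\C$ is PD. The only thing to be careful about is keeping track of why $\u_1 \ne \0$ (it is a normalized eigenvector), since positive definiteness only guarantees strict positivity on nonzero vectors; with PSD-ness alone, non-negativity holds unconditionally on $\u_1$. This lemma is used downstream to justify that the leading coefficient $\langle \C, \U_1\rangle$ in the decomposition \eqref{eq:decompose} can be retained as a legitimate (non-negative) weight when reassembling the projected matrix in $\cH_{\u}^+$.
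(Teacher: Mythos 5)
Your proof is correct. It takes a slightly different, more direct route than the paper: you use the cyclic property of the trace to collapse $\langle \C, \U_1 \rangle = \Tr(\u_1 \u_1^{\top}\C) = \u_1^{\top}\C\,\u_1$ and then invoke the definition of positive (semi-)definiteness as a quadratic form, noting $\u_1 \neq \0$ to even get strict positivity. The paper instead factors $\C = \B\B^{\top}$ (using PD-ness to guarantee such a factorization), applies the same cyclic-trace manipulation, and exhibits the inner product as the squared norm $\|\u_1^{\top}\B\|_2^2 \geq 0$. The two arguments hinge on the same trace identity; yours is marginally more elementary in that it needs no factorization and works verbatim under the weaker PSD hypothesis, while the paper's factorization route makes the non-negativity visible as an explicit sum of squares without appealing to the quadratic-form characterization of definiteness. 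Either way the conclusion $\langle \C, \U_1 \rangle \geq 0$ (and, under strict PD-ness, $> 0$) is established, which is all that is needed for $\mu_1$ in the decomposition \eqref{eq:decompose}.
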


\begin{proof}
Since $\C$ is PD by assumption, without loss of generality we can write $\C = \B \B^{\top}$. 
Then
\begin{small}
\begin{align}
\langle \C, \U_1 \rangle &= \text{Tr} \left( ( \B \B^{\top})^{\top}  \u_1 \u_1^{\top} \right) 
= \text{Tr} \left( \B \B^{\top} \u_1 \u_1^{\top} \right) \nonumber \\
&= \text{Tr} \left( \B^{\top} \u_1 \u_1^{\top} \B \right) 
= \langle \u_1^{\top} \B, \u_1^{\top} \B \rangle.
\nonumber
\end{align}
\end{small}
The last term is a Frobenius norm $\| \u_1^{\top} \B \|^2_2 \geq 0$.
\end{proof}

Define $\mu_{1} = \langle \C, \U_1 \rangle \geq 0$. 
We express $\bOmega^{\perp}$ as the span of orthogonal rank-1 matrices $\U_2, \ldots, \U_{K}, \V_{K+1}, \ldots, \V_{N}$, where $\U_k = \u_k \u_k^{\top}$, $\V_i = \v_i \v_i^{\top}$, and $\|\v_i\|_2 = 1$. 
Suppose that the inner products of $\C$ with the orthogonal components, $\langle \C, \U_k \rangle, \langle \C, \V_i \rangle$, are within range $[0, \mu_{1}]$, $\forall k, i$. 
This means that the respective Rayleigh quotients are no larger than $\mu_1$:
\begin{align}
\text{Tr}(\u_k^{\top} \C \u_k) 
= \text{Tr}(\C \u_k \u_k^{\top})
= \langle \C, \U_k \rangle \leq \mu_1, 
~~\forall k
\end{align}
Then $\u_1$ is surely the \textit{last} eigenvector of $\C$ corresponding to \textit{largest} eigenvalue $\mu_{1}$. 
If this is not the case, then we need to approximate orthogonal component $\C_{\bOmega^{\perp}}$ as $\hat{\C}_{\bOmega^{\perp}}$ in order to ensure $\u_1$ is indeed the last eigenvector:
\begin{small}
\begin{align}
\min_{\{\V_i\}, \hat{\C}_{\bOmega^{\perp}}} 
\| \C_{\bOmega^{\perp}} - \hat{\C}_{\bOmega^{\perp}} \|^2_2, ~~~\mbox{s.t.} ~~
\left\{ \begin{array}{l}
\langle \V_i, \U_k \rangle = 0, ~\forall i, k \\
\langle \V_i, \V_j \rangle = \delta_{i-j}, ~\forall i, j \\
\V_i = \v_i \v_i^{\top}, ~\forall i \\
\|\v_i \|_2 = 1, ~\forall i \\
0 \leq \langle \hat{\C}_{\bOmega^{\perp}}, \U_k \rangle \leq \mu_{1}, ~\forall k \\
0 \leq \langle \hat{\C}_{\bOmega^{\perp}}, \V_i \rangle \leq \mu_{1}, ~\forall i
\end{array} \right.
\label{eq:origOpt}
\end{align}
\end{small}
Jointly optimizing $N-K$ rank-1 matrices $\V_i$'s and $\hat{\C}_{\Omega^{\perp}}$ in 
\eqref{eq:origOpt} is difficult. 
We propose a greedy algorithm instead next.

\vspace{-0.08in}
\subsection{Gram-Schmidt-inspired Algorithm}

Our algorithm is inspired by the \textit{Gram-Schmidt} procedure \cite{hoffmann1989iterative} that iteratively computes a set of orthonormal vectors given a set of linearly independent vectors. 
Similarly, our algorithm iteratively computes one orthogonal rank-1 matrix $\V_t = \v_t \v_t^{\top}$ at each iteration $t$, for $t \geq K+1$.
For iteration $t \leq K$, first $K$ eigenvector $\{\u_k\}^K_{k=1}$ are known, so we will simply use $\U_t = \u_t \u_t^{\top}$.

Consider iteration $t=2$.
From \eqref{eq:decompose}, we first define the \textit{residual signal} as $\E_1 = \C - \langle \C, \U_1 \rangle \U_1$.
Consider first the case where $K \geq 2$, and thus rank-1 matrix $\U_2 = \u_2 \u_2^{\top}$ orthogonal to $\U_1$ is pre-selected. 
Thus, we require only that the inner product $\mu_2 = \langle \hat{\C}, \U_2 \rangle$ of approximation $\hat{\C}$ and $\U_2$ to be $\leq \mu_1$. 
Mathematically,
\begin{align}
\mu_2 = \left\{ \begin{array}{ll}
\langle \E_1, \U_2 \rangle, & \mbox{if}~~ \langle \E_1, \U_2 \rangle \leq \mu_1 \\
\mu_1, & \mbox{o.w.}
\end{array} \right.
\end{align}

Consider next the case where $K=1$, and we must identify a new rank-1 matrix $\V_2 = \v_2 \v_2^{\top}$ orthogonal to $\U_1$ to reconstruct $\hat{\C}$.
In this case, to minimize error $\|\C - \hat{\C}\|^2_2$, we seek a $\V_2$ ``most aligned" with $\E_1$, \ie, 
\begin{align}
\max_{\v_2} ~ \langle \E_1, \v_2 \v_2^{\top} \rangle, 
~~\mbox{s.t.} ~~
\left\{ \begin{array}{l}
\langle \v_2 \v_2^{\top}, \U_1 \rangle = 0 \\
\|\v_2 \|_2 = 1
\end{array} \right.
\label{eq:maxEnergy}
\end{align}
Essentially, \eqref{eq:maxEnergy} seeks a \textit{rank-1 approximation} $\langle \E_1, \V_2 \rangle \V_2$ of matrix $\E_1$, while constraining $\V_2$ to be orthogonal to $\U_1$. 
The objective is equivalent to $\v_2^{\top} \E_1 \v_2$, which is convex for a maximization problem.

To convexify the problem, we first approximate $\E_1 \approx \e \e^{\top}$ where $\e$ is the last eigenvector\footnote{Extreme eigenvectors of sparse symmetric matrices can be computed efficiently using fast numerical methods such as LOBPCG \cite{lobpcg}.} of $\E_1$, and $\e \e^{\top}$ is the best rank-1 approximation of $\E_1$.
We then relax the norm equality constraint and rewrite \eqref{eq:maxEnergy} into the following optimization:
\begin{align}
\max_{\v_2} \e^{\top} \v_2, 
~~\mbox{s.t.} ~~ 
\left\{ \begin{array}{l}
\v_2^{\top} \u_1 = 0 \\
\| \v_2 \|^2_2 \leq 1
\end{array} \right.
\label{eq:maxEnergy2}
\end{align}
Optimization \eqref{eq:maxEnergy2} is now convex and solvable in polynomial time, using algorithms such as \textit{proximal gradient} (PG) \cite{PG}.

Having computed $\V_2 = \v_2 \v_2^{\top}$ in \eqref{eq:maxEnergy2}, we project $\E_1$ onto $\V_2$ and threshold its inner product to within $[0, \mu_1]$ as done in the $K \geq 2$ case, \ie,
\begin{align}
\mu_2 = \left\{ \begin{array}{ll}
\langle \E_1, \V_2 \rangle, & \mbox{if} ~ \langle \E_1, \V_2 \rangle \leq \mu_{1} \\
\mu_1, & \mbox{o.w.} 
\end{array} \right.
\label{eq:threshold}
\end{align}
Note that $\langle \E_1, \V_2 \rangle \geq 0$.
We omit the proof of the more general case $\langle \E_k , \V_k \rangle \geq 0$ for brevity.
The projection of $\E_1$ on $\V_2$ is thus $\mu_2 \V_2$.
We then compute new residual signal $\E_2$:
\begin{align}
\E_2 = \E_1 - \mu_2 \V_2.     
\end{align}
Given residual $\E_2$, we again compute a rank-1 matrix $\V_3 = \v_3 \v_3^{\top}$---orthogonal to $\V_2$ and $\U_1$---that is most aligned with $\E_2$, and compute projection of $\E_2$ to $\V_3$, and so on. 

More generally, at each iteration $t \leq K$, we threshold the inner product $\mu_{t+1} = \min (\langle \E_t, \U_{t+1} \rangle, \mu_t)$, and compute the next residual signal $\E_{t+1} = \E_t - \mu_{t+1} \U_{t+1}$.
On the other hand, at each iteration $t \geq K+1$, we first approximate $\E_t \approx \e \e^{\top}$ where $\e$ is $\E_t$'s last eigenvector. 
We then compute an optimal $\v_{t+1}$:
\begin{small}
\begin{align}
\max_{\v_{t+1}} ~ \e^{\top} \v_{t+1}, 
~~\mbox{s.t.} ~~ 
\left\{ \begin{array}{l}
\v_{t+1}^{\top} \u_k = 0, ~~ k \in \cI_K \\
\v_{t+1}^{\top} \v_{\tau} = 0, ~~ \tau \in \{K+1, \ldots, t\} \\
\| \v_{t+1} \|_2^2 \leq 1
\end{array} \right.
\label{eq:maxEnergy3}
\end{align}
\end{small}\noindent

We threshold the inner product $\mu_{t+1} = \min (\langle \E_t, \V_{t+1} \rangle, \mu_t)$, where $\V_{t+1} = \v_{t+1} \v_{t+1}^{\top}$. 
We compute the next residual signal $\E_{t+1} = \E_t - \mu_{t+1} \V_{t+1}$.

We finally note that our constructed operator $\text{Proj}(\cdot)$ is indeed a projection, since it is provably \textit{idempotent} \cite{vetterli2014foundations}, \ie, $\text{Proj}(\text{Proj}(\P)) = \text{Proj}(\P)$ for any PD matrix $\P \in \cH^+$. 

\section{Graph Laplacian Matrix Estimation}
\label{sec:algo2}


Given an empirical covariance matrix $\bar{\C}$ computed from data, we formulate the following optimization problem to estimate a graph Laplacian matrix $\L$ starting from the GLASSO formulation in \cite{mazumder2012graphical}:
\begin{align}
\min_{\L \in \cH^+_{\u}} ~~ \text{Tr}(\L \bar{\C}) -\log \det \L + \rho \; \| \L \|_1
\label{eq:glasso}
\end{align}
where $\rho > 0$ is a shrinkage parameter for the $l_1$ norm.
The only difference from GLASSO is that \eqref{eq:glasso} has an additional constraint $\L \in \cH^+_{\u}$. 
Because $\cH^+_{\u}$ is a convex set, \eqref{eq:glasso} is a convex problem.


We solve \eqref{eq:glasso} iteratively using our developed projection operator in Section\;\ref{sec:algo} and a variant of the \textit{block Coordinate descent} (BCD) algorithm in \cite{wright2015coordinate}. 
Specifically, we first solve the \textit{dual} of GLASSO as follows.
We first note that the $l_1$ norm in \eqref{eq:glasso} can be written as 
\begin{align}
  \|\L\|_1 = \max_{\| \U \|_\infty\leq 1} ~~\text{Tr}(\L\U) 
\end{align}
where $\| \U \|_\infty$ is the maximum absolute value element of the symmetric matrix $\U$. 
Then the dual problem of GLASSO that solves for an estimated covariance matrix $\C = \L^{-1}$ is
\begin{align}
\min_{\C^{-1} \in \cH^+} ~~ -\log \det \C, 
~~~\mbox{s.t.} ~~ 
\| \C - \bar{\C} \|_\infty \leq \rho
\label{eq:glasso_dual}    
\end{align}
where $\C =\bar{\C}+\U$  implies that the primal and dual variables are related via $\L = {(\bar{\C}+\U)}^{-1} $ \cite{Banerjee7}. 
To solve \eqref{eq:glasso_dual}, 
we update one row-column pair in $\C$ in \eqref{eq:glasso_dual} in each iteration. 
Specifically, we first reorder the rows / columns of $\C$ so that the optimizing row and column are swapped to the last.
We then partition $\bar{\C}$ and $\C$ into blocks: 
\begin{align}
\C = 
\begin{pmatrix}
    \C_{11} & \c_{12}\\
    \c_{12}^\top & c_{22}
\end{pmatrix},~~ 
\bar{\C} =
\begin{pmatrix}
    \bar{\C}_{11} & \bar{\c}_{12}\\
    \bar{\c}_{12}^\top & \bar{c}_{22}
\end{pmatrix}
\end{align}\noindent
\cite{Banerjee7} showed that the diagonals of $\C$ remain unchanged, and the optimal $\c_{12}$ is the solution to the following linearly constrained quadratic programming problem:
\begin{align}
\c_{12} = \text{arg}\,\min\limits_{\y}\, \{\y^\top\C_{11}^{-1}\y\},  ~~\mbox{s.t.} ~~ \| \y - \bar{\c}_{12} \|_\infty \leq \rho
\label{eq:c12}
\end{align}

Our algorithm to solve \eqref{eq:glasso} is thus as follows.
We minimize the GLASSO terms in \eqref{eq:glasso} by solving its dual \eqref{eq:glasso_dual}---iteratively updating one row / column of $\C$ using \eqref{eq:c12}.
We then project $\L = \C^{-1}$ to convex cone $\cH_{\u}^+$ using our projection operator. 
We repeat these two steps till convergence.




\begin{figure*}[!ht]
\centering
\subfloat[]{\includegraphics[width=0.18\textwidth]{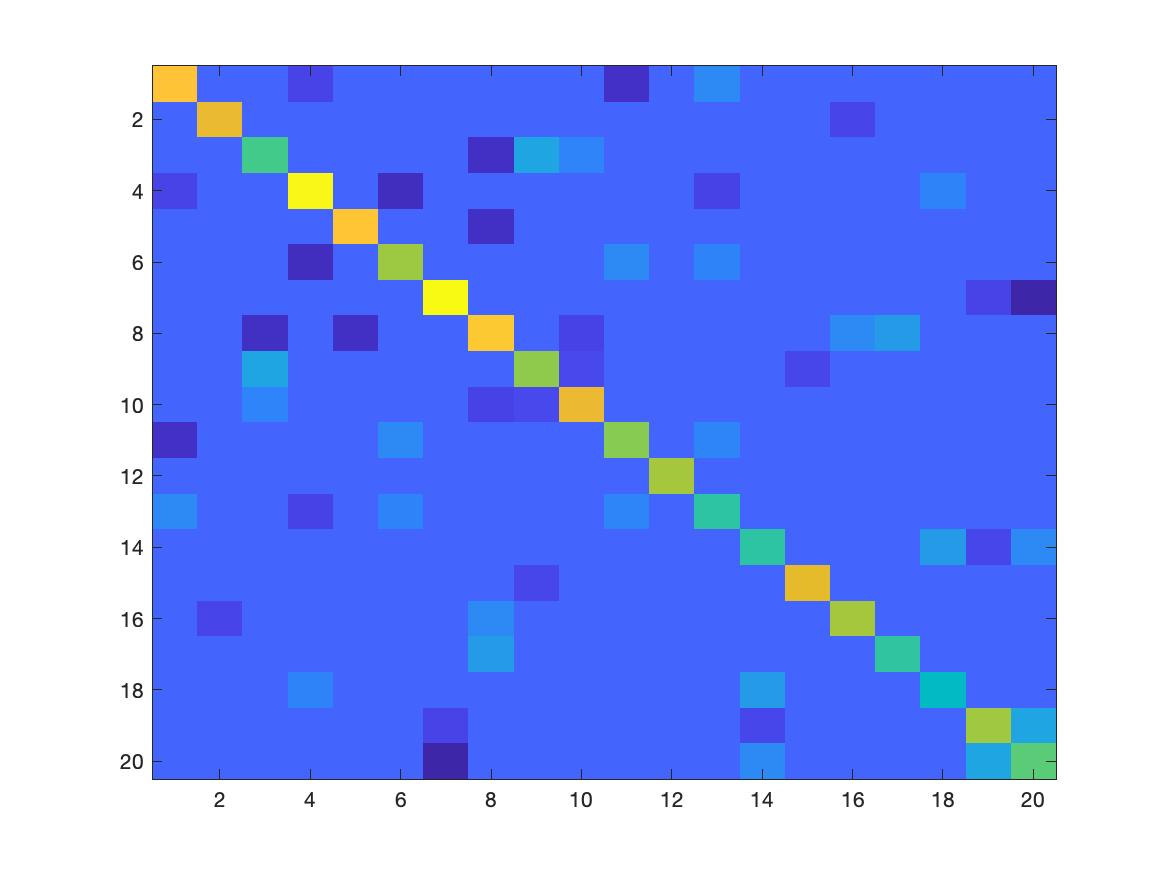}}
\subfloat[]{\includegraphics[width=0.18\textwidth]{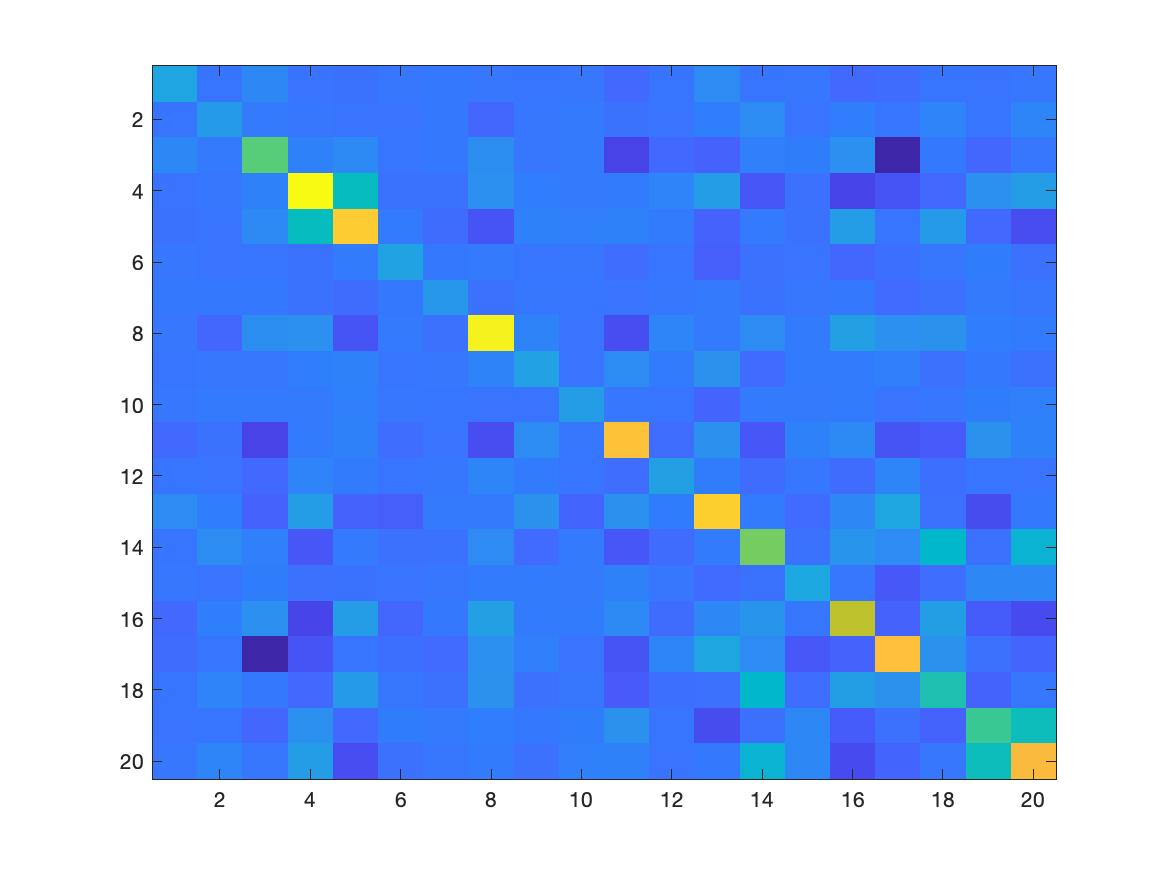}}
\subfloat[]{\includegraphics[width=0.18\textwidth]{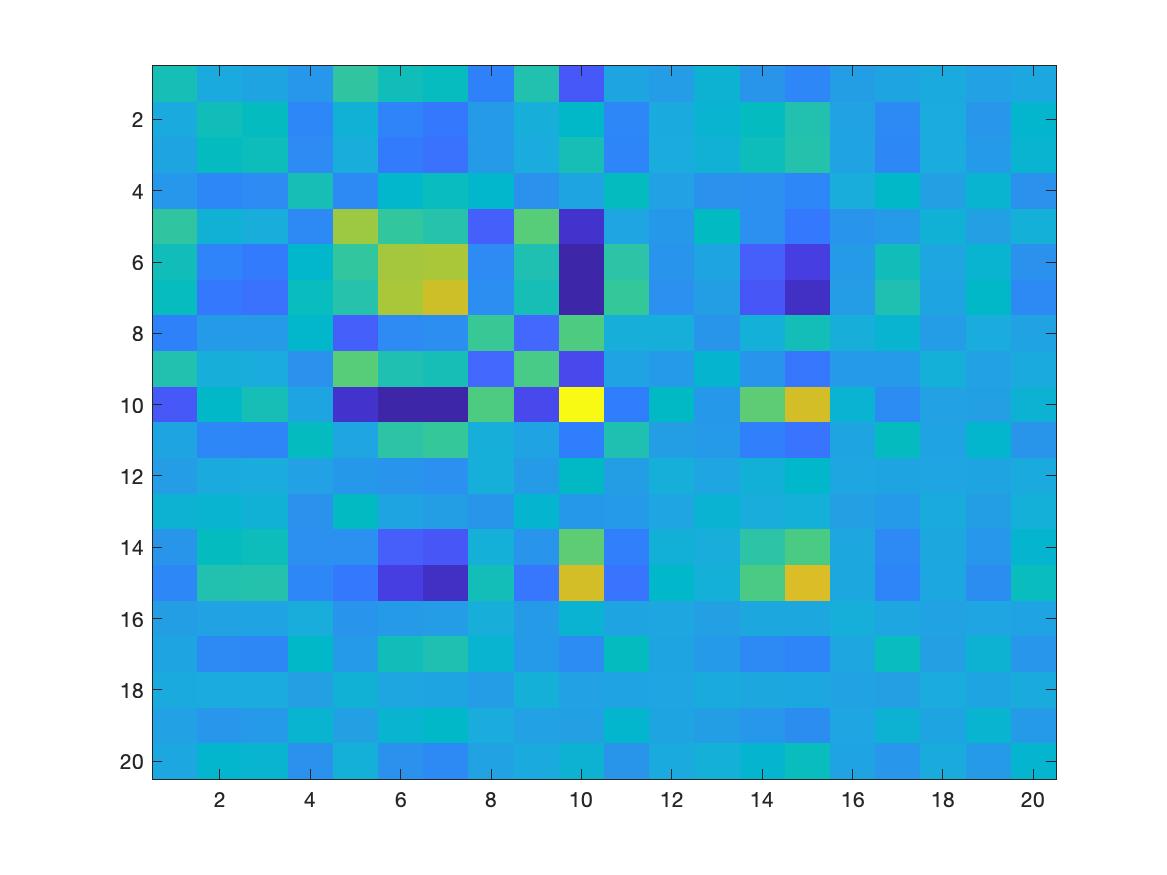}}
\subfloat[]{\includegraphics[width=0.18\textwidth]{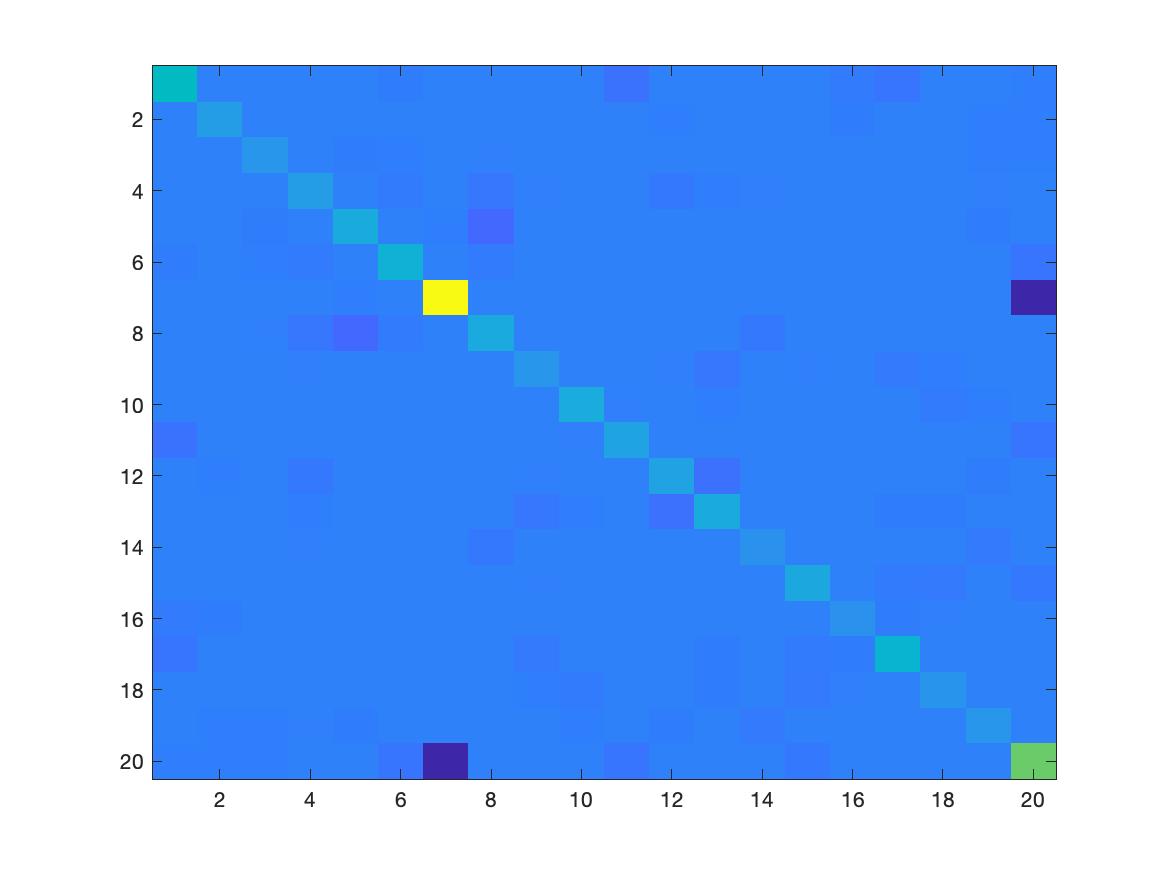}}
\subfloat[]{ \includegraphics[width=0.18\textwidth]{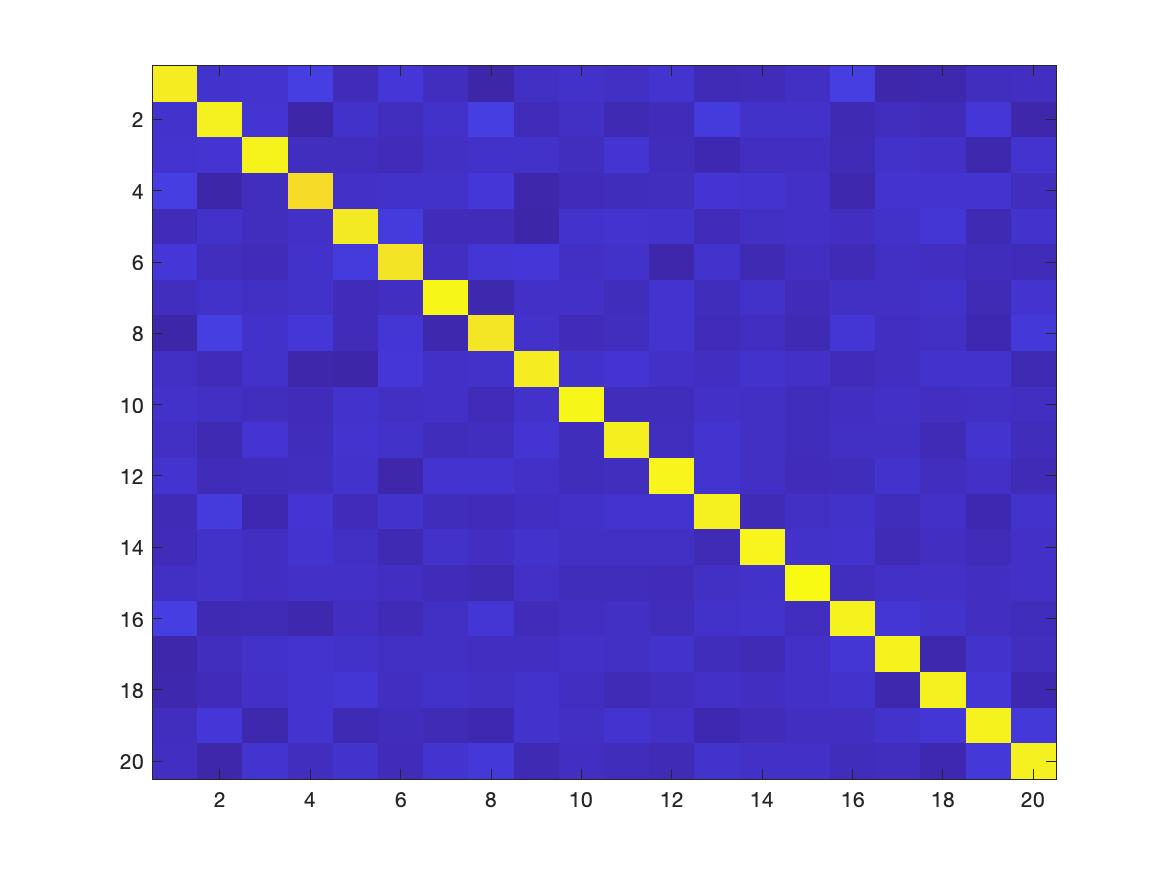}}
\vspace{-0.15in}
    \caption{\textbf{(a)} Ground Truth Laplacian $\cL$ , \textbf{(b)} Proposed Proj-Lasso with $K=1$,  \textbf{(c)} GLASSO  \cite{articleglasso}, \textbf{(d)} DDGL \cite{egilmez17} and \textbf{(e)} GL-SigRep \cite{dong2016learning} .}\label{fig:gl}
\end{figure*}

\vspace{-0.15in}

\section{Experimental Results}
\label{sec:results}
\vspace{-0.05in}
\subsection{Experiments Setups for Synthetic Datasets}

\vspace{-0.05in}
We conducted experiments using synthetic datasets to evaluate the performance of our method (Proj-Lasso) and of competing schemes: 
Graphical Lasso (GLASSO) \cite{articleglasso}, graph Laplacian learning with Gaussian probabilistic prior ({GL-SigRep})  \cite{dong2016learning} and diagonally dominant generalized graph Laplacian estimation under structural and Laplacian constraints ({DDGL}) \cite{egilmez17}.
The convergence tolerance for those  algorithms was set to $\epsilon = 10^{-4}$, and the regularization parameter was $\rho = e^{-6}$. 

To simulate ground truth graphs, we first randomly located 20 nodes in 2D space and used the Erdos-Renyi model \cite{erdos59a} to determine their connectivity with probability $0.6$. 
We then computed edge weights using a Gaussian kernel, \textit{i.e.}, $w_{ij} = \exp{(-d(i, j)^2 /2\sigma^2)}$, where $d(i, j)$ is the Euclidean distance between $i$ and $j$ and $\sigma$ is 0.5. 
Edge weights smaller than $0.75$ were removed for sparsity. 
To introduce negative weights, we flipped the sign of each edge with probability 0.5 and then computed the generalized graph Laplacian $\cL$ (with no self-loops). 
To generate data from $\cL$, we first computed covariance $\cK = (\cL + \epsilon \I)^{-1}$ for $\epsilon = 0.5$. 
We then generated data set $\cX = \{\mathbf{x}_i\}_{ i=1}^{20}$ from multivariate normal distribution  $\mathbf{x} \sim \cN (0,\cK)$.
An empirical covariance matrix $\bar{\C}$ was then computed from $\cX$ and used as input to different graph learning algorithms.

We employed three popular graph similarity metrics to evaluate graph Laplacian matrices learned: 
relative error (RE), DeltaCon and $\lambda$-distance \cite{DeltaCon,metrics,lamda-dist-bunke2006graph,lamda-dist}. 
Specifically, RE computes the relative \textit{Frobenius norm} error between the ground truth Laplacian matrix $\cL$ and the learned matrix  $\hat{\L}$. \textit{DeltaCon} compares the similarities between all node pairs in the two graphs. 
$\lambda$-distance metric computes the eigenvalues' distance between two matrices.


\vspace{-0.1in}
\begin{table}[h!] 
\centering
 \resizebox{0.75\textwidth}{!}{\begin{minipage}{\textwidth}
  \begin{tabular}{|P{1.4cm}|P{1.3cm}|P{1.3cm}|P{1.3cm}|P{0.9cm}|P{0.9cm}|P{0.9cm}|}
    \hline
    \multirow{2}{*}{Metric} &
     \multirow{2}{*}{GLASSO} &
      \multirow{2}{*}{GL-SigRep} &
      \multirow{2}{*}{DDGL} &
      \multicolumn{3}{c|}{Proj-Lasso with different $K$'s}
      \\ \cline{5-7}
        &&&&$1$ & $2$ & $3$ \\
    \hline
    RE & 0.9306 & 0.7740 &  0.7543 & 0.7295 & 0.5191 & 0.4005   \\
    \hline
    DeltaCon & 0.9422 & 0.9449 & 0.9407 & 0.9502 & 0.9495 & 0.9586  \\
    \hline
    $\lambda$-distance & 1e+03 & 8.7168 & 28.7918 & 9.0983 & 3.3768 & 2.7125\\
    \hline
  \end{tabular}
  \end{minipage}}
  
\vspace{-0.1in}
\caption{Average RE, Deltacon and $\lambda$-distance for graph produced with 20 vertices and 20 signals on each node.}   \label{tab:error}
\end{table}

\vspace{-0.2in}
\begin{table}[h!] \centering
  \begin{tabular}{|P{1.4cm}|P{1.3cm}|P{1cm}|P{1cm}|P{1cm}|}
    \hline
    \multirow{1}{*}{Metric} &
    $K=N $&
     $K=1$ &
      $K=2$ &
     $K=3$ \\
    \hline
    RE & 1.4224 & 1.3375 &  1.3595 & 1.3595  \\
    \hline
    DeltaCon & 0.9283 & 0.9295 & 0.9237 & 0.9323\\
    \hline
    $\lambda$-distance & 2.5390 & 2.3943 & 2.4010 & 2.4010\\
    \hline
  \end{tabular}
\vspace{-0.1in}  
\caption{Average Relative Errors using different number of first $K$ eigenvectors from Givens rotation method \cite{lemagoarou2017analyzing}.}
  \label{table2}
\end{table}

\vspace{-0.2in}
\subsection{Results assuming the First $K$ Eigenvectors}

\vspace{-0.05in}
Table\;\ref{tab:error} shows the graph learning performance of different methods evaluated using the aforementioned three metrics. 
For our proposed Proj-Lasso, we also simulated its performance using different $K$'s---the number of known first eigenvectors---which in our experiments were the first $K$ eigenvectors of the ground truth matrix $\cL$. 
As shown in Table\;\ref{tab:error}, using RE as metric, Proj-Lasso outperformed GLASSO, DDGL and GL-SigRep when $K=1$ and became even better as $K$ increased. 
Using $\lambda$-distance as metric, Proj-Lasso was always better than  its competitors. 
Using DeltaCon as metric, all the methods are comparable.  
We observe that Proj-Lasso's performance evaluated using RE and $\lambda$-distance improved significantly as $K$ increased, while the results of DeltaCon metric were not sensitive to $K$. 

We show visual comparisons of learned Laplacian matrices in Fig.\;\ref{fig:gl}, demonstrating that Proj-Lasso had the best performance, \textit{i.e.}, Proj-Lasso is visually closer to the ground truth matrix than others. 

\vspace{-0.1in}
\subsection{Results using Eigenvectors from Givens Rotation Method}

\vspace{-0.05in}
\cite{lemagoarou2017analyzing} developed a method based on Givens rotation matrices to approximately diagonalize Laplacian $\L$, \ie, 
\begin{align}
    \L \approx \S_1 \cdots \S_J \hat{\bLambda} \S_J^\top \cdots \S_1^\top
\end{align}
where $\S_1, \ldots , \S_J$ are Givens rotation matrices that are
both sparse and orthogonal, and $\hat{\bLambda}$ is a near-diagonal matrix.
$\T = \S_J^\top \cdots \S_1^\top$ can be interpreted as a fast \textit{Graph Fourier Transform} (FGFT) that approximates the original eigen-matrix of $\L$.
$\T$ is sparse since each $\S_j$ is sparse, and thus computation of transform coefficients $\balpha = \T \x$ can be efficient.
$J$ is a parameter to trade off the complexity of the transform $\T$ and the GFT approximation error.

In this experiment, we assumed that the first $K$ rows of $\T$ are chosen as the first $K$ eigenvectors in our prior, which are the input to our algorithm Proj-Lasso.
The algorithm then computed the remaining $N-K$ eigenvectors to compose Laplacian $\L$.
We set $J=2000$.
We compared the performance of Proj-Lasso against the scheme that uses all the $N$ rows of $\T$. Table\;\ref{table2} shows learning performances using different numbers of rows from $T$ ($K=1, \ldots 3$ is Proj-Lasso, and $K=N$ is Givens method). 
We observe that Proj-Lasso has smaller error for both metrics RE and $\lambda$-distance. 


\section{Conclusion}
\label{sec:conclude}
Given observable graph signals to compute an empirical covariance matrix $\bar{\C}$, we propose a new graph learning algorithm to compute the most likely graph Laplacian matrix $\L$, where we assume that the first $K$ eigenvectors of $\L$ are pre-selected based on domain knowledge or an alternative criterion. 
We first prove that the subspace $\cH^+_{\u}$ of symmetric positive semi-definite (PSD) matrices sharing the first $K$ eigenvectors $\{\u_k\}$ in a defined Hilbert space is a convex cone. 
We construct an operator, inspired by the Gram-Schmidt procedure, to project a positive definite (PD) matrix $\P$ into $\cH^+_{\u}$.
We design an efficient algorithm to compute the most suitable $\L \in \cH^+_{\u}$ given $\bar{\C}$, combining block coordinate descent (BCD) in GLASSO and our projection operator. 
Experimental results show that our algorithm outperformed competing graph learning schemes when the first $K$ eigenvectors are known.




\newpage
\bibliographystyle{IEEEbib}
\bibliography{ref2}

\end{document}